%% file: mainCDC2026.tex
\documentclass[letterpaper, 10pt, conference]{ieeeconf}  
\usepackage{amsmath,amssymb,amsfonts}

\usepackage{amsthm}
\usepackage[noadjust]{cite}
\usepackage{graphicx}
\usepackage[table]{xcolor}
\let\labelindent\relax
\usepackage{enumitem}
\usepackage{lipsum}
\usepackage{float}
\usepackage{array}
\usepackage{booktabs}
\usepackage{multirow}
\usepackage{siunitx}
\usepackage{url}
\usepackage{mathrsfs}
\usepackage{ragged2e}
\usepackage{tikz}
\usepackage{pgfplots}
\usepackage{algorithm, algpseudocode}
\usepackage{bm}

\IEEEoverridecommandlockouts
\title{\LARGE \bf
A Finite-Gain Stability Approach to NMPC Design: the Extended Version}

\author{Carlo Novara, Mattia Boggio, Lorenzo Calogero, Michele Pagone 
\thanks{
    The work of C. Novara and M. Pagone was supported by the Space It Up project, funded by the Italian Space Agency (ASI) and the Ministry of University and Research (MUR) under contract n. 2024-5-E.0, CUP n. I53D24000060005.}
\thanks{
    The authors are with the Department of Electronics and Telecommunications, Politecnico di Torino, 10129 Turin, Italy (e-mails: \{carlo.novara, mattia.boggio, lorenzo.calogero, michele.pagone\}@polito.it)}
}

\newtheorem{theorem}{Theorem}
\newtheorem{lemma}{Lemma}

\newtheorem{remark}{Remark}
\newtheorem{assumption}{Assumption}
\newtheorem{definition}{Definition}

\newcommand{\vs}{\vspace}

\newcommand{\blista}{\renewcommand{\labelenumi}{(\roman{enumi})} % list with roman numbers.
	\begin{enumerate}}
	\newcommand{\elista}{\end{enumerate} \renewcommand{\labelenumi}{\arabic{enumi}.}}

\allowdisplaybreaks

\begin{document}
\maketitle
\pagestyle{empty}
\thispagestyle{empty}

\begin{abstract}
This paper proposes a novel approach to design of Nonlinear Model Predictive Control (NMPC) schemes based on Finite-Gain Stability (FGS) concepts. 
The proposed formulation considers the case where the plant is affected by unknown but bounded disturbances, which renders difficult the classical Lyapunov-based analysis/design. 
Based on FGS conditions for a closed-loop system, we develop a systematic NMPC design methodology, allowing us to choose the relevant NMPC parameters that lead to closed-loop FGS and provide a satisfactory tracking performance, also for the case of time-varying reference signals. A simulated example is presented to demonstrate the effectiveness of our framework, concerned with lateral/longitudinal control of an automated vehicle.

\end{abstract}

\section{Introduction}
In the framework of optimal control, Nonlinear Model Predictive Control (NMPC) has emerged as one of the most effective and reliable control methodologies for nonlinear systems. Its success stems from its inherent ability to provide optimal control commands for multi-variable systems in the presence of inputs, outputs, and state constraints \cite{rawlings2017}.

In this context, establishing conditions for closed-loop stability is a nontrivial problem that poses exceptional challenges. This problem is even more relevant in the presence of disturbances, whose effects on the system state and tracking error are difficult to characterize. It is then natural to ask under which condition the \emph{nominal} NMPC\footnote{Henceforth, with \emph{nominal} NMPC we shall denote a \emph{non-robust} controller which, in principle, is still able to guarantee the stability of the closed-loop in presence of external disturbance.} is able to ensure that perturbed trajectories remain bounded or converge close to the desired reference. Such problem is, in general, difficult to characterize.

Nowadays, nominal NMPC  stabilizing design theory is typically based on the  Lyapunov approach, see, e.g. \cite{Mayne2013}, and the reference therein. Nevertheless, its extension to the case where the plant is affected by unknown disturbances and time varying references is nontrivial. Early works (see, e.g., \cite{teel2004,kellet2004}) provide Lyapunov-based stability certificates without, however, offering constructive guidelines for determining disturbance bounds and for explicitly quantifying their effect on the tracking performance, making the stabilizing design rather conservative and hard to apply in practice. 
%Furthermore, the search for a control Lyapunov function to serve as terminal cost and/or the admissible forward-invariant terminal sets may be harsh and mathematically demanding \cite{Gilbert:Tan:1991:Admissible,Kerrigan:00}.

Alternative approaches, which can be more convenient than the standard Lyapunov-based procedure in certain settings, formulate the closed-loop stability of nonlinear systems by leveraging the notions of Input-to-State Stability (ISS) \cite{Limon2009} and Finite-Gain Stability (FGS) \cite{ZACCARIAN2003,book-khalil-2002}.
%Alternative approaches for formulating the closed-loop stability of nonlinear systems employs the (similar) concepts of Input-to-State Stability (ISS) \cite{limon2002, limon2006, Limon2009} and Finite-gain Stability (FGS) \cite{ZACCARIAN2003,book-khalil-2002}. 
Both these concepts characterize how the system state/output converges or remains bounded, depending on the initial condition, the magnitude of the reference, and the bounds on the disturbance. 

Building up on the recent results on ISS and FGS, in this paper, we propose a novel FGS-based design approach to NMPC control systems. We establish sufficient conditions under which, given a set of feasible initial conditions and possibly time-varying references, under the application of the NMPC control law, the closed-loop trajectories remain bounded and converge towards a  neighborhood centered on the reference.

The proposed framework is not meant to serve as a robust counterpart of the nominal NMPC (as in \cite{limon2006, Limon2009}), rather, it defines conditions under which a NMPC controller is inherently capable to attenuate the external disturbance for letting the system's output to converge and remain bounded in a close neighborhood of the reference. %(without, however, zero-offset asymptotically convergence).
The stability of the closed-loop is not guaranteed through the standard terminal ingredients of the Lyapunov framework \cite{Mayne2013}, but follows from the FGS bound estimates. 
However, unlike the classical ISS-based and FGS-based methods, see, e.g., \cite{limon2002, Alessandretti2017}, thanks to a quasi-LPV (Linear Parameter Varying) formulation, we explicitly deal with the case of time-varying reference, but in a different fashion with respect to the artificial reference tracking MPC approach, see, e.g., \cite{KOHLER2024100929,krupa2024} and the references therein. Indeed, in this latter approach, the terminal constraints/penalties may be hard to design for arbitrary time-varying references. Instead, we formulate FGS concepts using a tracking-oriented error bound that directly accounts for reference variations.

Hence, we focus on the choice of the NMPC configuration parameters, namely the prediction horizon and the weighting matrices appearing in the cost function. Given a region of initial conditions, a class of reference signals, the goal is to select an NMPC configuration that provides both accurate tracking and suitable stability properties when the plant is subject to unknown but bounded disturbances. 

To this end, we propose an FGS-based three-step offline NMPC design procedure: i) a finite simulation horizon is selected so as to capture the transient interval over which the contraction of the closed-loop error dynamics becomes visible; ii) a family of candidate NMPC configurations is generated by varying the main tuning parameters around a nominal design; iii) each candidate is evaluated over a scenario set composed of initial conditions and reference signals. In this way, the FGS analysis is used as a practical guide for controller synthesis.

The appeal of the proposed methodology stems from the fact that each candidate NMPC configuration is assessed only over a finite horizon, using quantities that are meaningful both from a stability and from a performance viewpoint. 
%This makes the procedure efficient from a computational point of view, and attractive when one seeks a practical and interpretable way to tune NMPC controllers over a prescribed operating domain. 
This makes the procedure computationally efficient and particularly suitable for the practical tuning of NMPC controllers over a prescribed operating domain, while preserving a high level of interpretability.
The novel contribution of the paper is twofold: i) we derive finite-gain bounds for the tracking error of the closed-loop NMPC system; ii) we exploit this bound to formulate an offline stabilizing design methodology for selecting the NMPC parameters. 

The proposed approach is illustrated on an automated-vehicle control problem involving lateral and longitudinal motion, where the method is used to choose the prediction horizon and cost weights before validating the resulting controller on lane-keeping and obstacle-avoidance maneuvers.

\subsection{Notation}

In the following, $x = (x_1, \ldots, x_n) = (x_i)_{i=1}^n \in \mathbb{R}^n$ denotes a column vector $x$, with components $x_i$; $x = [x_1, \ldots, x_N] = [x_i]_{i=1}^n = (x_1, \ldots, x_n)^\top \in \mathbb{R}^{1 \times n}$ denotes a row vector; $\|x\|$ is the $2$-norm of vector $x$; $\|x\|_M = \|M^{1/2}x\| = \sqrt{x^\top M x}$ is the weighted norm of vector $x$, with weighting matrix $M$; $\|A\|$ is the induced $2$-norm of matrix $A$.

\input{nmpc_ecc_1.tex}

\input{fgs_nominal_ecc_2.tex}

\input{nmpc_design_2.tex}

\section{Example: Lateral/longitudinal control for an automated vehicle} \label{sec:results}
We present the application of the proposed NMPC design methodology to the problem of controlling the lateral and longitudinal dynamics of an automated vehicle. The goal is to show how the systematic procedure described in Section~\ref{sec:nmpc_des} leads to the selection of an NMPC configuration that gives closed-loop finite-gain stability and satisfactory tracking performance. The resulting controller is then validated on two standard maneuvers: lane keeping and obstacle avoidance. 
\subsection{Vehicle and prediction model} \label{sec:vehicle_model}
% The same model is used for both simulating the vehicle dynamics and as the prediction model within the NMPC. In particular, 
A standard Dynamic Single-Track (DST) model is considered, providing a representation of both lateral and longitudinal vehicle dynamics. Although simpler than other models, it captures the key aspects required for designing and conducting preliminary tests on vehicle control systems. The state equations of the DST model are as follows:
\vspace*{-0.4em}
\begin{equation}
\begin{aligned}
&\dot{p}_X =v_{x}\cos\psi-v_{y}\sin\psi\\
&\dot{p}_Y =v_{x}\sin\psi+v_{y}\cos\psi\\
&\dot{\psi}=\omega\\
&\dot{v}_{x} =v_{y}\dot{\psi}+a_x\\
&\dot{v}_{y} =-v_{x}\dot{\psi}+\frac{2}{m}\left(F_{yf}+F_{yr}\right)\\
&\dot{\omega} =\frac{2}{I_z}\left(l_{f}F_{yf}-l_{r}F_{yr}\right)
\end{aligned}
\label{eq:dstp_se}
\vspace*{-0.4em}
\end{equation}
where $p_X$ and $p_Y$ are the coordinates of the vehicle in an inertial frame, $\psi$ is the yaw angle, $\omega$ denotes the yaw rate, and $v_x$, $v_y$ are the longitudinal and lateral speeds, respectively. The main physical parameters of the model and their values are as follows: mass $m=1575\,\text{kg}$; moment of inertia $I_z=4000\,\text{kgm}^2$; distances of Center of Gravity (CoG) to front wheels $l_f=1.2\,\text{m}$ and rear wheels $l_r=1.6\,\text{m}$. Moreover, $F_{yf}$ and $F_{yr}$ are the lateral forces, given by
$F_{yf}=-c_f\beta_f, \, F_{yr}=-c_r\beta_r$, 
where $c_f=2.7\cdot10^{4}\,\text{N/rad}$ and $c_r=2\cdot10^{4}\,\text{N/rad}$ are the front/rear cornering stiffnesses. The front/rear tire slip angles $\beta_f$ and $\beta_r$ are defined as
\vspace*{-0.4em}
\begin{equation}
\beta_{f}=\mathrm{atan}\left(\frac{v_{y}+l_{f}\dot{\psi}}{v_{x}}\right)-\delta_{f}, \,\,
\beta_{r}=\mathrm{atan}\left(\frac{v_{y}-l_{r}\dot{\psi}}{v_{x}}\right).
\end{equation}
The longitudinal acceleration $a_x$ and the steering angle $\delta_f$ are the control variables with the following saturation constraints:
$a_x \in [-5,\, 3]~\mathrm{m/s^2}, \delta_f \in [-0.78,\, 0.78]~\mathrm{rad}.$
The DST equations have been discretized by means of the Forward Euler method, using the sampling time $Ts = 0.1$ s. A discrete-time model of the form \eqref{eq:dstp_se} has been obtained and used for both simulating the vehicle dynamics and as the prediction model within the NMPC. The controlled outputs considered in the NMPC design are $(p_X,\,p_Y,\,\psi,\,v_x)$. To account for model–plant mismatch and measurement errors, a Gaussian noise term was added to the plant dynamics during simulation. Specifically, the noise had a mean vector  $\mu = [0.5 ,\,0.5,\,0.001,\,0.05,\,0.05,\,0.001]$ and covariance matrix $\Sigma = \mathrm{diag}(10^{-4},\,10^{-4},\,10^{-5},\,10^{-4},\,10^{-5},\,10^{-5})$.
% The longitudinal acceleration $a_x$ and the steering angle $\delta_f$ are the control variables. The DST equations have been discretized by means of the Forward Euler method, using the sampling time $Ts = 0.1$ s. A discrete-time model of the form \eqref{eq:dstp_se} has been obtained and used for both simulating the vehicle dynamics and as the prediction model within the NMPC.
%as the NMPC internal model.
\subsection{NMPC design}
The NMPC configuration is obtained by following the procedure detailed in Section~\ref{sec:nmpc_des}. 
%All closed-loop simulations and NMPC controllers use a common, fixed sampling time \(T_s = 0.1\ \mathrm{s}\). 
The scenario sets \(\mathcal{S}_{\mathrm{in}}\) and \(\mathcal{S}\) are generated using the Latin Hypercube Sampling (LHS) \cite{McKay1979}. LHS is a statistical
method that produces quasi-random samples of parameter
values from a multidimensional distribution, providing better
coverage of the domain compared to random sampling,
where each new sample is generated independently of previously chosen points. The initial state set \(X_0\) was defined as:
\[
X_0 = \{x_0 = (p_{X,0},p_{Y,0},\psi_0,v_{x,0},v_{y,0},\omega_0) \mid  
\]
\[
p_{X,0} \in [0,100]~\mathrm{m},\, p_{Y,0}, \in [0,100]~\mathrm{m},\, 
\psi_0 \in [-0.5,\,0.5]~\mathrm{rad},  
\]
\[
 v_{x,0} \in [5,\,15]~\mathrm{m/s},\;v_{y,0} \in [0.1,\,1]~\mathrm{m/s},\, 
\]
\[\omega_0 \in [-0.1,\,0.1]~\mathrm{rad/s}\}.\]
The reference set \(\mathcal{R}\) was generated using the same intervals for the corresponding reference variables (position, yaw angle and longitudinal velocity), ensuring that initial conditions and targets are sampled from a consistent domain.

\subsubsection{Selection of the simulation horizon $\tau^*$}
The first step consists in determining the value of $\tau^*$. The initial scenario set $\mathcal{S}_{\mathrm{in}}$ was constructed considering a total of $M=10$ scenarios. A nominal NMPC configuration 
$\phi_{\mathrm{nom}}$ with  $T_{\mathrm{nom}}=30$, $Q_{\mathrm{nom}}=\mathrm{diag}(1,1,0.1,1)$, $R_{\mathrm{nom}}=\mathrm{diag}(0.1,1)$ and $P_{\mathrm{nom}}=Q_{\mathrm{nom}}$
was selected. For each scenario $s^i \in \mathcal{S}_{\mathrm{in}}$, closed-loop simulations were performed until the vehicle reached the reference trajectory, with a total simulation duration of $150$ time steps. At each time instant $t = 1, \dots, T_{\mathrm{fin}}$, the finite-gain index $\gamma_i(t)$ was computed according to~\eqref{eq:stab_cond}. The smallest integer $\tau_i$ satisfying Eq. \eqref{eq:gamma_t} was then stored for each scenario. The resulting values of $\tau_i$ ranged between $8$ and $12$ sampling steps. To ensure robustness to untested conditions, the simulation horizon was set to $\tau^* = 15$. 
This value of $\tau^*$ was subsequently used in the next steps of the NMPC design procedure. 

\subsubsection{Closed-loop Function Sampling campaign}
An evaluation scenario set $\mathcal{S}$ of $N=20$ scenarios was generated. Then, a function sampling campaign was conducted to evaluate a set of $K=160$ candidate NMPC configurations. The prediction horizon was varied over $T\in\{5,\,10,\,15,\,20,\,25,\,30,\,35,\,40,\,45,\,50\},
$ and the weighting matrices were obtained by scaling the nominal NMPC configuration $\phi_{\mathrm{nom}}$:
\[
Q=\alpha_Q Q_{\mathrm{nom}},\,\, R=\alpha_R R_{\mathrm{nom}},\,\, \alpha_Q,\alpha_R\in\{0.1,\,1,\,10,\,100\}.
\]
For simplicity, the terminal cost matrix $P$ was set equal to $Q$, i.e., $\alpha_P=\alpha_Q$. For each configuration $\phi_j$, closed-loop simulations were performed over the finite simulation horizon $\tau^*$ and for all scenarios in $S$, resulting in more than $3000$ simulations. The tracking error $E(\phi_j,s^i)$ and the finite-gain index $L(\phi_j,s^i)$ were stored according to~\eqref{eq:E_c} and~\eqref{eq:L_c}. The complete design procedure took about $40$ minutes, using a laptop with processor Intel(R) Core(TM) i7-10750H CPU @ 2.60GHz. %Compared to an our previous design procedure, based on long-horizon simulations, the proposed method reduced the total computational time of about one order of magnitude.

\subsubsection{Selection of the optimal NMPC configuration}
After computing the normalized indices $E_n(\phi)$ and $L_n(\phi)$, the scalar cost function
\[
J_{sr}(\phi) = \alpha_J E_n(\phi) + (1-\alpha_J)L_n(\phi)
\]
was minimized with $\alpha_J=0.5$, providing a good balance between tracking accuracy and stability. The resulting optimal NMPC configuration $\phi^\star$ is $T^\star = 20, \,
\alpha_Q^\star = \alpha_P^\star=100, \,
\alpha_R^\star = 1.
$

\subsubsection{Visualization and sensitivity analysis}
To illustrate the influence of design parameters, level curve plots of the metrics $E(\phi)$ and $L(\phi)$ were generated over the grid $(T,\kappa)$. An illustrative example of the resulting level curve of $L(\phi)$ for $R=1$ and varying $T$ and $Q$ is shown in Figure~1. This figure provides a visual map of how the stability margin changes as the prediction horizon and the relative state-to-input weighting are varied. The white contours denote iso-values of $L(\phi)$, while the color map highlights the corresponding numerical levels. Regions associated with smaller values of $L(\phi)$ indicate NMPC configurations with a stronger contraction property and, therefore, a more favorable finite-gain stability margin. The figure shows that the best configurations are concentrated in limited portions of the grid, suggesting that the choice of $T$ and $\kappa$ has a nontrivial effect on closed-loop robustness.
%\vspace*{-0.6em}
\begin{figure}[htb]
\begin{center}
\includegraphics{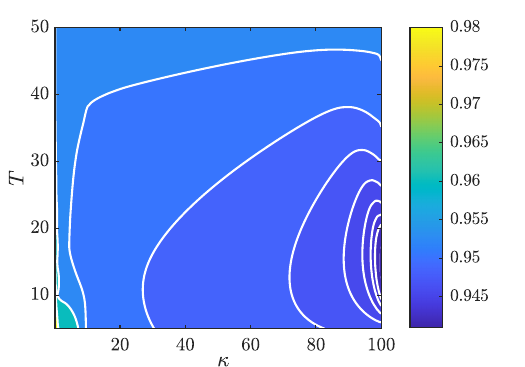}  
\caption{Level curves of $L(\phi)$.} 
\label{fig:obs_avoid}
\end{center}
\end{figure}
\vspace*{-1em}

\subsection{Lane Keeping}
The selected NMPC configuration $\phi^\star$ was first validated on a lane-keeping scenario and compared with an alternative configuration $\phi_a$ with $T=40$, $\alpha_Q=\alpha_P=20$ and $\alpha_R=1$. 
Both controllers were tested on a curvilinear urban road, modeled by a sinusoidal path of the form
$r = 8 \sin(0.02\,t),$
corresponding to moderate lateral curvature variations. The longitudinal velocity reference was set to a constant value of $v_x^{\mathrm{ref}} = 40~\mathrm{km/h}$. The control inputs and the saturation limits are defined in Section \ref{sec:vehicle_model}.
Table~\ref{tab:lane_results} summarizes the key tracking error metrics. The results indicate that \(\phi^\star\) achieves marginally better lateral tracking and smaller orientation deviations than \(\phi_a\). The improved precision of \(\phi^\star\) is consistent with its high state weighting (\(\alpha_Q^\star=100\)).

\begin{table}[htb]
\centering
\caption{Comparison between \(\phi^\star\) and \(\phi_a\).}
\label{tab:lane_results}
\begin{tabular}{lcc}
\toprule
 & \(\phi^\star\) & \(\phi_a\) \\
\midrule
RMS lateral error [m] & \(0.025\) & \(0.029\) \\
Maximum lateral error [m] & \(0.07\) & \(0.1\) \\
RMS orientation error [rad] & \(0.0005\) & \(0.00055\) \\
Maximum orientation error [rad] & \(0.0007\) & \(0.0012\) \\
\bottomrule
\end{tabular}
\end{table}

\subsection{Obstacle Avoidance}

The second validation scenario involves an obstacle-avoidance maneuver designed to test the capability of the NMPC to ensure collision-free operation while preserving stability and maintaining acceptable tracking accuracy after the avoidance phase. 
%Obstacles are represented by nonlinear state constraints of the form:
% \begin{equation}
% \frac{(X - c_{X})^2}{(l_X/2)^2} +
% \frac{(Y - c_{Y})^2}{(l_{Y}/2)^2}
% \geq 1.
% \label{eq:obs_constraints}
% \end{equation}
The obstacle is modeled as an elliptical safety region with center coordinates $(c_{X}, c_{Y}) = (5, 5)$ and sizes $(l_{X}, l_{Y}) = (1, 1)$. As shown in Figure~\ref{fig:obs_avoid},  both NMPC configurations successfully executed the avoidance maneuver without violating the safety constraints. However, notable qualitative differences were observed. The optimal configuration $\phi^\star$ maintained the vehicle trajectory closer to the nominal path connecting the initial and final points, thus minimizing lateral deviation during the maneuver. This behaviour reflects the higher weighting of the controller state and its shorter prediction horizon, which together enable faster corrective actions and improved reactivity.
%\vspace*{-1em}
\begin{figure} [htb]
\begin{center}
\includegraphics{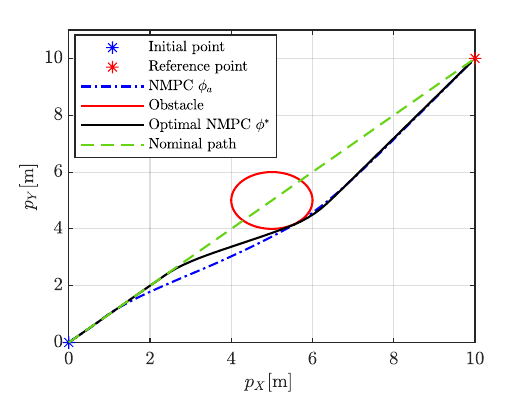}   
\caption{Obstacle avoidance scenario.} 
\label{fig:obs_avoid}
\end{center}
\end{figure}
%\vspace*{-1em}
\section{Conclusions} \label{sec:conclusion}
This paper has presented a finite-gain-based approach to the stabilizing design of NMPC control systems. The main result is a bound on the closed-loop tracking error under a finite-step contraction condition on the trajectory-dependent error dynamics, which makes explicit the role of the initial error, the disturbance, and the reference variation. An offline scenario-based design procedure has been proposed to select the prediction horizon and weighting matrices by combining a tracking-performance index with a finite-gain stability index computed over a finite simulation horizon. The method has been validated on an automated-vehicle example, where it identified a controller configuration with satisfactory tracking and stability properties over the vehicle maneuvers. 

\section*{Appendix}

\subsection{Proof of Lemma~\ref{lem:ltv_err}}

Consider the closed-loop system~\eqref{eq:cls1}. Its tracking error at time $t+1$ is given by
\begin{align}
    e_{t+1} = \; & r_{t+1}-x_{t+1} = r_{t+1}-f_{c}(x_{t},\bm{r}_{t})-d_{t} \nonumber \\
    = \; & r_{t+1}-f_{c}(r_{t},\bm{r}_{t})-d_{t} + f_{c}(r_{t},\bm{r}_{t})-f_{c}(x_{t},\bm{r}_{t}).
\end{align}

From the definition of the matrix $A_{t}$ in Eq.~\eqref{eq:ltv1}, we have
\begin{align}
    A_{t}(r_{t}-x_{t}) = \; & \frac{(f_{c}(r_{t},\bm{r}_{t})-f_{c}(x_{t},\bm{r}_{t}))(r_{t}-x_{t})^{\top}(r_{t}-x_{t})}{\left\| r_{t}-x_{t} \right\|^{2}} \nonumber \\
    = \; & f_{c}(r_{t},\bm{r}_{t})-f_{c}(x_{t},\bm{r}_{t}),
\end{align}
which implies
\begin{align}
    f_{c}(r_{t},\bm{r}_{t})-f_{c}(x_{t},\bm{r}_{t})=A_{t}(r_{t}-x_{t})=A_{t}e_{t}.
\end{align}

The claim follows. \hfill $\square$

\subsection{Proof of Theorem \ref{thm:fgs_nom}.}

Define $v_{t}=\xi_{t}-d_{t}=r_{t+1}-f_{c}(r_{t},\bm{r}_{t})-d_{t}$
and iterate the tracking error dynamics~\eqref{eq:treev}, i.e.,
\begin{align}
    e_{t+2} = \; & A_{t+1}e_{t+1}+v_{t+1} \nonumber = \; A_{t+1}A_{t}e_{t}+A_{t+1}v_{t}+v_{t+1} \\[1ex]
    e_{t+3} = \; & A_{t+2}e_{t+2}+v_{t+2} \nonumber \\
    = \; & A_{t+2}A_{t+1}A_{t}e_{t}+A_{t+2}A_{t+1}v_{t} + A_{t+2}v_{t+1}+v_{t+2} \\
    \vdots \; & \nonumber
\end{align}

Considering the transition matrix $F_{t}^{\tau}$ in Eq.~\eqref{eq:ltv1}, we have
\begin{align}
    e_{t+\tau} = \; & F_{t}^{\tau}e_{t}+\sum_{i=t}^{t+\tau-1}F_{i+1}^{t+\tau}v_{i}=F_{t}^{\tau}e_{t}+\mu_{t}
\end{align}
where $\mu_{t}=\sum_{i=t}^{t+\tau-1}F_{i+1}^{t+\tau}v_{i}$.
We define the subsequence $\check{e}_{l}= e_{l\tau}$, $l \in \mathbb{Z}_{\geq 0}$,
whose dynamics is described by
\begin{align} \label{eq:subseq}
    \check{e}_{l+1} = \check{F}_{l}^{\tau}\check{e}_{l}+\mu_{l}
\end{align}
where $\check{F}_{l}^{\tau} = F_{l\tau}^{\tau}$.
From Assumption~\ref{assu:tm_bound}, given that $\gamma=\max_{t \geq 0, \, x_0 \in \mathcal{X}_{0}, \, \bm{r} \in \mathcal{R}^\infty} \left\| F_{t}^{\tau}\right\| $,
the transition matrix
\begin{align}
    \Psi_{l}^{m} = \begin{cases}
        \check{F}_{l+m-1}^{\tau}\check{F}_{l+m-2}^{\tau}\ldots\check{F}_{l}^{\tau}, & m>0, \\
        I, & m=0
    \end{cases}
\end{align}
is bounded as $\left\| \Psi_{l}^{m}\right\| \leq\gamma^{m}$
for all $l, m \geq 0$. Now, consider that the solution of Eq.~\eqref{eq:subseq} is given by
\begin{align}
    \check{e}_{l} = \; & \Psi_{0}^{l}\check{e}_{0}+\sum_{j=0}^{l-1}\Psi_{j+1}^{l-j-1}\mu_{j}.
\end{align}
It follows that the subsequence $\check{e}_{l}$ is bounded as
\begin{align}
    \left\| \check{e}_{l}\right\| \leq \; & \left\| \Psi_{0}^{l}\right\| \left\| \check{e}_{0}\right\| +\sum_{j=0}^{l-1}\left\| \Psi_{j+1}^{l-j-1}\right\| \left\| \mu_{j}\right\| \nonumber \\
    \leq \; & \gamma^{l}\left\| \check{e}_{0}\right\| +\gamma^{l-1}\sum_{j=0}^{l-1}\gamma^{-j}\left\| \mu_{j}\right\|.
\end{align}
Moving back from the subsequence to the original sequence, we have that
\begin{align}
    \left\| e_{l\tau}\right\| \leq \; & \gamma^{l}\left\| e_{0}\right\| +\gamma^{l-1}\sum_{j=0}^{l-1}\gamma^{-j}\left\| \mu_{j}\right\| \nonumber \\
    \leq \; & \gamma^{l}\left\| e_{0}\right\| +\gamma^{l-1}\sum_{j=0}^{l-1}\gamma^{-j}\sum_{i=j}^{j+\tau-1}\left\| v_{i}\right\| \nonumber \\
    = \; & \gamma^{l}\left\| e_{0}\right\| +\gamma^{l-1}\sum_{i=0}^{l\tau-1}\gamma^{-\lfloor i/\tau \rfloor}\left\| v_{i}\right\|.
\end{align}
where $\lfloor\cdot\rfloor$ is the floor operator. This bound holds for $t=l\tau$, where $l \geq 0$.
For a generic $t \geq 0$, we observe that any sample $e_{t}$ can be obtained by a finite number of iterations of Eq.~\eqref{eq:treev}, starting from $e_{l\tau}$, with $l=\lfloor t/\tau \rfloor$.

We also observe that $\left\| A_{t}\right\| $ is bounded. Indeed, if, for some $t$, $\left\| A_{t}\right\| =\infty$, a trajectory would exist for which the condition $\left\| F_{t}^{\tau}\right\| <1$ would not hold, thus invalidating Assumption~\ref{assu:tm_bound}.

Since the number of iterations of Eq.~\eqref{eq:treev} is finite and $\left\| A_{t}\right\| $ is bounded for all $t$, it follows that a finite positive constant $L$ exists such that
\begin{align}
    \left\| e_{t} \right\| \leq \; & L\gamma^{l}\left\| e_{0}\right\| +L\gamma^{l-1}\sum_{i=0}^{t-1}\gamma^{-\lfloor i/\tau \rfloor}\left\| v_{i}\right\|.
\end{align}

Consider now that $l=\lfloor t/\tau \rfloor\geq t/\tau-1$, which
implies that $\gamma^{l}=\gamma^{\lfloor t/\tau \rfloor}\leq\gamma^{t/\tau-1}$.
This yields
\begin{align}
    \left\| e_{t}\right\| \leq \; & L\gamma^{t/\tau-1}\left\| e_{0}\right\| +L\gamma^{t/\tau-2}\sum_{i=0}^{t-1}\gamma^{-i/\tau}\left\| v_{i}\right\| \nonumber \\
    = \; & L\gamma^{t/\tau-1}\left\| e_{0}\right\| +L\gamma^{-2}\sum_{i=0}^{t-1}\gamma^{(t-i)/\tau}\left\| v_{i}\right\|.
\end{align}

Recalling that $v_{i}=\xi_{i}-d_{i}$ and defining $\rho=\gamma^{1/\tau}<1$,
we conclude that finite non-negative constants $\Lambda_{0}$, $\Lambda_{\xi}$
and $\Lambda_{d}$ exist, such that
\vspace*{-0.4em}
\begin{align}
    \left\| e_{t}\right\| \leq \; & \Lambda_{0}\rho^{t}\left\| e_{0}\right\| +\Lambda_{\xi}\sum_{i=0}^{t-1}\rho^{t-i}\left\| \xi_{i}\right\| +\Lambda_{d}\sum_{i=0}^{t-1}\rho^{t-i}\left\| d_{i}\right\|.
\end{align}
\vspace*{-0.4em}

This proves claim (ii). Claim (i) directly follows from this inequality, considering that the reference sequence is bounded. \hfill $\square$

%\section*{Acknowledgment}

\bibliographystyle{IEEEtran}
\bibliography{references.bib,lettaltr.bib}

\end{document}

%% file: nmpc_ecc_1.tex
\section{Discrete-time NMPC }

\label{sec:nmpc}

%\subsection{Plant to control}

Consider a discrete-time nonlinear system described by the following
state equation:
\begin{equation}
x_{t+1}=f(x_{t},u_{t})+d_{t}\label{eq:plant}
\end{equation}
where $t\in\mathbb{N}_{0}$ is the time index, $x_{t}\in\mathbb{R}^{n_{x}}$
is the system state, $u_{t}\in U\subset\mathbb{R}^{n_{u}}$ is the
command input, $d_{t}\in D\subset\mathbb{R}^{n_{d}}$ is an exogenous
disturbance, $U$ and $D$ are compact sets. The state is assumed
to be measured in real-time. If not measured, it can be estimated
by means of an observer or an input-output model can be used instead
of the state equations. 

%\textcolor{red}{Dobbiamo dire qualcosa su $X$? Teoricamente $X = \mathbb{R}^{n_x}\setminus B$ dove $B$ è l'ostacolo. Quindi in realtà $X$ non è compatto. Però poi introduciamo $X_0$ come compatto e connesso, ma solo per le condizioni iniziali. Forse vale la pena scrivre che esiste $X \subset \mathbb{R}^{n_x}\setminus B$ con $X_0 \subseteq X$. }

The control task is to make the system state $x_{t}$ track a desired
reference signal $r_{t}\in\mathbb{R}^{n_{x}}$. The state and input
variables may be subject to constraints and it may be of interest
to obtain a suitable trade-off between performance and command effort.
To this aim, a NMPC approach
is used. 

%\subsection{NMPC formulation}

The NMPC formulation developed in this paper is as follows. At each
time step $t$, the system state $x_{t}$ is measured. An optimal
command is computed based on $x_{t}$, and then applied to the system.
%The optimal command is obtained by means of two key operations: prediction and optimization. 
%\emph{Prediction.} At each time step $t$, the system state is predicted
%$T$ steps in the future, where $T$ is called the \emph{prediction
%horizon}. The $k$-step prediction is obtained by iterating $k$ times
%equation (\ref{eq:plant}). For any $k\in[0,T-1]$, the predicted
%state $\hat{x}_{k}$ is a function of the applied input sequence $(\mathfrak{u}_{0},\ldots,\mathfrak{u}_{k-1})$,
%where the time step ``$0$'' of the prediction interval corresponds
%to the time step $t$ of the plant evolution. 
Let $T\geq 1$ be the prediction horizon, the NMPC objective function is defined as 
\begin{equation}
J_{o}\left(x_t,\boldsymbol{\mathfrak{u}}\right)\doteq\sum_{k=0}^{T-1}\mathfrak{u}_{k}^{\top}R\mathfrak{u}_{k}+\sum_{k=1}^{T-1}\tilde{x}_{k}^{\top}Q\tilde{x}_{k}+\tilde{x}_{T}^{\top}P\tilde{x}_{T}\label{eq:obj}
\end{equation}
where $\tilde{x}_{k}\doteq\mathfrak{r}_{k}-\hat{x}_{k}$ is the predicted
tracking error, $\mathfrak{r}_{k}\doteq r_{t+k}\in\mathbb{R}^{n_{x}}$,
$k=1,\ldots,T$, is the reference to track, $\boldsymbol{\mathfrak{u}}\doteq(\mathfrak{u}_{0},\ldots,\mathfrak{u}_{T-1})\in\mathbb{R}^{Tn_{u}\times1}$
is the applied input sequence, and $R,Q,P$ are diagonal positive-definite weight matrices. 

The input constraints are described by $\mathfrak{u}_{k}\in U,\,k=1,\ldots,T$.
The state constraints are treated as \emph{soft constraints} using penalty functions. In particular,
for each constraint, a penalty function $\sigma_{j}\in\mathcal{C}^{1}$
is introduced, such that $\sigma_{j}(\hat{x}_{k})=0$ when the constraint
is satisfied, $\sigma_{j}(\hat{x}_{k})\gg J_{o}\left(\boldsymbol{\mathfrak{u}}\right)$
when it is not (see, e.g., \cite{pagone2024} and \cite{malisani2016}). The penalty functions are added to $J_{o}$, giving
the ``augmented'' objective function 
\begin{equation}
\begin{array}{c}
J\left(x_t,\mathfrak{u}\right)\doteq J_{o}\left(x_t,\mathfrak{u}\right)+\sum_{j,k}\sigma_{j}(\hat{x}_{k})\end{array}.\label{eq:Jtot}
\end{equation}
The optimal input sequence is computed at each time $t\in\mathbb{N}_{0}$,
solving the following optimization problem:
\begin{eqnarray}
 &  & \boldsymbol{\mathfrak{u}}^{*}=\arg\underset{\boldsymbol{\mathfrak{u}}}{\min}\,J\left(x_{t},\mathfrak{u}\right)\label{eq:opt}\\
 &  & \begin{array}{l}
\textrm{subject to:}\vs{1.5mm}\\
\qquad\hat{x}_{0}=x_{t}\vs{1.5mm}\\
\qquad\hat{x}_{k+1}={f}(\hat{x}_{k},\mathfrak{u}_{k}),\;k=0, \ldots,T-1\vs{1.5mm}\\
\qquad\mathfrak{u}_{k}\in U,\;k=0, \ldots,T-1.
\end{array}\label{eq:cons}
\end{eqnarray}

The plant is controlled by NMPC under the one-step receding horizon
policy: at each time instant $t \geq 0$, only the first optimal input
sample is applied to the plant. The remainder of the input sequence
is discarded.

%\subsection{Input sequence dimension reduction}

The optimization problem (\ref{eq:opt}) is in general non-convex.
Moreover, the command sequence $\boldsymbol{\mathfrak{u}}\doteq(\mathfrak{u}_{0},\ldots,\mathfrak{u}_{T-1})$
may contain a relatively large number of decision variables. An efficient
method to reduce the number of variables is the Move
Blocking technique, see, e.g.,~\cite{Tracking:MPC:Limon:Automatica:08, SHEKHAR2012587}: the command sequence $\boldsymbol{\mathfrak{u}}$
is parametrized as
\begin{equation}
\boldsymbol{\mathfrak{u}}=\Gamma c\label{eq:dimred}
\end{equation}
where $\Gamma\in\mathbb{R}^{Tn_{u}\times n_{c}}$ is a full-rank matrix
with $n_{c}<Tn_{u}$, and $c\in\mathbb{R}^{n_{c}}$ is a new command
input sequence with reduced dimension. A sample $\mathfrak{u}_{k}$
of $\boldsymbol{\mathfrak{u}}$ can be obtained using a selection
matrix $S_{k}^{u}$ such that $\mathfrak{u}_{k}=S_{k}^{u}\boldsymbol{\mathfrak{u}}=S_{k}^{u}\Gamma c$. Note that $\Gamma$ is a design parameter, selected a-priori.

% \begin{remark}
%     Note that, $\Gamma$ is treated as a fixed design choice, selected a-priori in order to ensure that the NMPC optimization problem remains feasible.
% \end{remark}

Under the above input parametrization, the optimization problem
becomes:
\begin{eqnarray}
 &  & c^{*}=\arg\underset{c}{\min}\,J\left(x_t,\Gamma c\right)\label{eq:opt-1}\\
 &  & \begin{array}{l}
\textrm{subject to:}\vs{1.5mm}\\ \nonumber
\qquad\hat{x}_{0}=x_{t}\vs{1.5mm}\\
\qquad\hat{x}_{k+1}={f}(\hat{x}_{k},S_{k}^{u}\Gamma c),\;k=0,\ldots,T-1\vs{1.5mm}\\
\qquad S_{k}^{u}\Gamma c\in U,\;k=0,\ldots,T-1.
\end{array}\label{eq:cons-1}
\end{eqnarray}
  \textcolor{red}{}

%% file: fgs_nominal_ecc_2.tex
\section{Finite-gain analysis}

In this section, a closed-loop finite-gain stability analysis is developed,
for the case where an exact model of the plant (\ref{eq:plant}) is
available. The case of approximated model is under investigation.
Preliminary results (not reported here) show that the present finite-gain
framework can be extended to the ``robust'' case without substantial
modifications. 

The finite-gain stability concept we consider is the following.

\begin{definition}\label{def:fg_stab}Consider a discrete-time nonlinear
system $x_{t+1}=g(x_{t},w_{t})$, where $x_{t}\in\mathbb{R}^{n_{x}}$
is the state, $x_{0}\in X_{0}\subseteq\mathbb{R}^{n_{x}}$ is the
initial condition, and $w_{t}\in W\subseteq\mathbb{R}^{n_{w}}$ is
the input (containing all external signals acting on the system, including
commands and disturbances). The system is finite-gain stable on $(X_{0},\mathcal{W})$
if a constant $0\leq\rho<1$, and finite non-negative constants $\Lambda_{0},\Lambda_{w},\Lambda_{b}$
exist, such that
\[
\left\Vert x_{t}\right\Vert \leq\Lambda_{0}\rho^{t}\left\Vert x_{0}\right\Vert +\Lambda_{w}\sum_{i=0}^{t-1}\rho^{t-i}\left\Vert w_{i}\right\Vert +\Lambda_{b},\quad t=1,2,\ldots
\]
for all initial conditions $x_{0}\in X_{0}$ and input sequences $(w_{0},w_{1},\dots)\in\mathcal{W}$.
 Stability is said without bias if $\Lambda_{b}=0$. $\square$

\end{definition}

This FGS concept is similar to the standard one, given, e.g., in \cite{book-khalil-2002}. However, it is somewhat more precise,
in the sense that it captures the dependence on initial conditions
and, through the constant $\rho$, the convergence/divergence behavior of the
system. Note that, the ISS concept corresponds to the FGS one when the bias $\Lambda_{b}=0$ and the full state is taken as system's output~\cite{limon2002}.

%\subsection{Finite-gain stability }

A finite-gain stability analysis is now developed for the closed-loop
system given by the plant (\ref{eq:plant}) controlled in feedback
by the NMPC strategy of Section \ref{sec:nmpc}. This system is described
by $x_{t+1}=f(x_{t},u_{t})+d_{k}$, where $u_{t}=S_{1}^{u}\Gamma c^{*}$
and $c^{*}$ is the solution of problem (\ref{eq:opt-1}) at time
$t$. Since this solution depends on $x_{t}$ and the reference sequence
$\mathbf{r}_{t}\doteq(r_{t+1},\ldots,r_{t+T})$, we use the notation
$c^{*}\equiv c^{*}(x_{t},\mathbf{r}_{t})$. The closed-loop system
state equation is
\begin{equation}
\begin{alignedat}{1}x_{t+1} & =f(x_{t},u_{t})+d_{t} =f(x_{t},S_{1}^{u}\Gamma c^{*}(x_{t},\mathbf{r}_{t}))+d_{t}\\
 & =f_{c}(x_{t},\mathbf{r}_{t})+d_{t}
\end{alignedat}
\label{eq:cls1}
\end{equation}
where $f_{c}(x_{t},\mathbf{r}_{t})\doteq f(x_{t},S_{1}^{u}\Gamma c^{*}(x_{t},\mathbf{r}_{t}))$
is the closed-loop transition function. %, $d_{t}\in D\subset\mathbb{R}^{n_{d}}$
%is the disturbance and $D$ is a bounded set. 
The initial state and
each reference sample belong to a set where the penalty functions
are not active: $x_{0}\in X_{0}\doteq\bar{X}_{0}\cap X_{\mathrm{na}}$,
where $\bar{X}_{0}\subset\mathbb{R}^{n_{x}}$ is a compact connected
set and $X_{\mathrm{na}}\doteq\{z\in\mathbb{R}^{n_{x}}:\sigma_{j}(z)=0,\forall j\}$.
Similarly: $r_{t}\in R\doteq\bar{R}\cap X_{\mathrm{na}}$, where $\bar{R}\subset\mathbb{R}^{n_{x}}$
is a compact connected set. The reference sequence $\boldsymbol{r}=(r_{0},r_{1},\dots)$
belongs to a signal set of interest: $\boldsymbol{r}\in\mathcal{R}\subseteq R^{\infty}$,
where $\mathcal{R}\doteq\{(r_{0},r_{1},\dots):r_{k}\in R\}$.

The following lemma shows that 
the dynamics of the tracking error $e_{t}\doteq r_{t}-x_{t}$
can be described by a quasi-LPV (Linear Parameter Varying) equation.
%\footnote{Lemma 1 does not introduce a standard autonomous LTV system. It provides an exact trajectory-dependent affine representation of the tracking-error dynamics where the state matrix is defined point-wise along the closed-loop trajectory through $(x_t, r_t)$.}

\begin{lemma}\label{lem:ltv_err}Consider the closed-loop system
(\ref{eq:cls1}). The dynamics of the tracking error $e_{t}\doteq r_{t}-x_{t}$
is described by
\begin{equation}
\begin{alignedat}{1}e_{t+1} & =A_{t}e_{t}+\xi_{t}-d_{t}\end{alignedat}
\label{eq:treev}
\end{equation}
where $\xi_{t}\doteq r_{t+1}-f_{c}(r_{t},\mathbf{r}_{t})$ and
\begin{equation}
A_{t}\doteq\left\{ \begin{array}{l}
\frac{(f_{c}(r_{t},\mathbf{r}_{t})-f_{c}(x_{t},\mathbf{r}_{t}))(r_{t}-x_{t})^{\top}}{\left\Vert r_{t}-x_{t}\right\Vert ^{2}},\;x_{t}\neq r_{t}\\
\mathbf{0},\;x_{t}=r_{t}.
\end{array}\right.\label{eq:At_mat}
\end{equation}

\end{lemma}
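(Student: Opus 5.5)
The plan is a direct computation from the closed-loop equation (\ref{eq:cls1}). Write $e_{t+1}=r_{t+1}-x_{t+1}=r_{t+1}-f_c(x_t,\mathbf{r}_t)-d_t$. Then add and subtract $f_c(r_t,\mathbf{r}_t)$, which is the closed-loop map evaluated at the reference point with the same reference preview. This gives
\[
e_{t+1}=\bigl(r_{t+1}-f_c(r_t,\mathbf{r}_t)\bigr)-d_t+\bigl(f_c(r_t,\mathbf{r}_t)-f_c(x_t,\mathbf{r}_t)\bigr).
\]
The first bracket is exactly $\xi_t$ by definition. It then remains to show that the last bracket equals $A_t e_t$, with $A_t$ as in (\ref{eq:At_mat}).

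For this I split into two cases.

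\emph{Case $x_t\neq r_t$.} We have $e_t\neq 0$, and $A_t$ is the rank-one matrix $\Delta_t e_t^\top/\|e_t\|^2$, where $\Delta_t\doteq f_c(r_t,\mathbf{r}_t)-f_c(x_t,\mathbf{r}_t)$. Multiplying on the right by $e_t$ gives
\[
A_t e_t=\Delta_t\,\frac{e_t^\top e_t}{\|e_t\|^2}=\Delta_t,
\]
since $e_t^\top e_t=\|e_t\|^2$ is a nonzero scalar.

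\emph{Case $x_t=r_t$.} Here $\Delta_t=f_c(r_t,\mathbf{r}_t)-f_c(r_t,\mathbf{r}_t)=0$, and $A_t e_t=\mathbf{0}\cdot 0=0$, so the identity holds trivially.

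The only point needing care is that $f_c(r_t,\mathbf{r}_t)$ is well defined. It is the closed-loop map at the state $r_t$, i.e.\ $f$ evaluated with the NMPC input computed from $r_t$, and $c^*$ is treated as a function of its arguments, as in the setup. A minimizer exists because $U$ is compact and $J$ is continuous. If the minimizer is not unique, any fixed selection works, since the identity is purely algebraic.

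No step is really an obstacle; the lemma is an exact rewriting rather than an approximation. Combining the two cases gives (\ref{eq:treev}).
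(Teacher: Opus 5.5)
Your proof is correct and follows the paper's argument: add and subtract $f_c(r_t,\mathbf{r}_t)$, identify the first bracket as $\xi_t$, and verify $A_t e_t = f_c(r_t,\mathbf{r}_t)-f_c(x_t,\mathbf{r}_t)$ by right-multiplying the rank-one matrix by $e_t$. Your explicit handling of the case $x_t=r_t$ is a small improvement, since the paper's Appendix divides by $\|r_t-x_t\|^2$ without treating that case separately.
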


\textbf{Proof.} See the Appendix. $\square$

Besides providing a quasi-LPV representation of the tracking error dynamics,
the lemma shows that this dynamics is driven by two external variables:
the disturbance $d_{t}$ and the the quantity $\xi_{t}$. This latter
can be interpreted as a ``reachability'' variable, measuring the ``smoothness''
of the reference signal. Indeed, suppose for simplicity that the matrix
$R$ in (\ref{eq:obj}) is null. Consider the ideal situation where,
at a time $t$, $x_{t}=r_{t}$ and a $c$ exists such that $r_{t+1}=f(r_{t},S_{1}^{u}\Gamma c^{*}(x_{t},\mathbf{r}_{t}))$.
Then, minimizing the objective function yields $\xi_{t}=0$. In such
an ideal situation, $\xi_{t}$ is null and the tracking error dynamics
is only driven by the disturbance. In generic situations, ``smooth''
reference signals are expected to yield small values of $\xi_{t}$,
while references with abrupt variations give large values of $\xi_{t}$. 

Based on the above lemma, a finite-gain stability result is now derived,
using the following assumption. 

\begin{assumption}\label{assu:tm_bound}

Closed-loop transition matrix bound. Define the transition matrix
\begin{equation}
F_{t}^{\tau}\doteq\left\{ \begin{array}{l}
A_{t+\tau-1}A_{t+\tau-2}\ldots A_{t},\;\tau>0\\
I,\;\tau=0
\end{array}\right.\label{eq:ltv1}
\end{equation}
where $A_{i}$ is given in (\ref{eq:At_mat}). We assume that an integer
$\tau>0$ exists such that 
\begin{equation}
\gamma\equiv\gamma(\tau)\doteq\max\limits_{l\geq0,x_{0}\in X_{0},\boldsymbol{r}\in\mathcal{R}}\left\Vert F_{l}^{\tau}\right\Vert <1.\quad\square\label{eq:stab_cond}
\end{equation}

\end{assumption}

This assumption is based on the quasi-LPV tracking error dynamics (\ref{eq:treev}), whose properties can be studied using the theory of Linear Time Varying (LTV) systems.
Most stability assumptions that can be found in the LTV literature
regard the matrix $A_{l}$. For example, a standard assumption is
that $A_{l}$ has eigenvalues inside the unit circle and is ``slowly
varying'' \cite{Rugh96}. Another typical assumption is $\left\Vert A_{l}\right\Vert <1$,
$l=0,1,\ldots$, \cite{Rugh96}. In our approach, we do not make assumptions
on $A_{l}$ but on the matrix $F_{l}^{\tau}$. This latter is the
product of several subsequent matrices $A_{l}$, making our assumption
weaker than the standard ones regarding directly $A_{l}$. As shown by the following theorem, \eqref{eq:stab_cond} is the key condition for guaranteeing closed-loop finite-gain stability. %Moreover, this condition is the basis of the NMPC design method presented in Section \ref{sec:nmpc_des}.

% \begin{remark}
%     We remark that Assumption \ref{assu:tm_bound} is not itself a finite-gain statement. Rather, it is a finite-step contraction condition on the trajectory-dependent transition matrix $F_t^\tau$. Explicit weighted tracking-error bound for FGS will be derived, later on, in Theorem \ref{thm:fgs_nom}.
% \end{remark}

%The stability result can now be presented. 
\begin{theorem}\label{thm:fgs_nom}Let Assumption \ref{assu:tm_bound}
hold. Then:
\begin{itemize}
\item[(i)] The closed-loop system (\ref{eq:cls1}) is finite-gain stable on
$(X_{0},\mathcal{R})$. 
\item[(ii)] The tracking error $e_{t}\doteq r_{t}-x_{t}$ is bounded as 
\begin{equation}
\begin{alignedat}{1}\left\Vert e_{t}\right\Vert   \leq&\Lambda_{0}\rho^{t}\left\Vert e_{0}\right\Vert +\Lambda_{\xi}\sum_{i=0}^{t-1}\rho^{t-i}\left\Vert \xi_{i}\right\Vert \\
&+\Lambda_{d}\sum_{i=0}^{t-1}\rho^{t-i}\left\Vert d_{i}\right\Vert \end{alignedat}
\label{eq:treb}
\end{equation}
where $\rho\doteq\gamma^{1/\tau}<1$, and $\Lambda_{0}$, $\Lambda_{\xi}$,
$\Lambda_{d}$ are finite non-negative constants.
\end{itemize}
\end{theorem}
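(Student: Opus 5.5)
The proof will work with the quasi-LPV representation of Lemma~\ref{lem:ltv_err}. Write $v_t \doteq \xi_t - d_t$, so that $e_{t+1}=A_t e_t + v_t$. Iterating this recursion $\tau$ times from an arbitrary $t$ gives
\begin{equation*}
e_{t+\tau}=F_t^{\tau}e_t+\mu_t,\qquad \mu_t\doteq\sum_{i=t}^{t+\tau-1}F_{i+1}^{t+\tau-i-1}v_i,
\end{equation*}
with $F$ as in \eqref{eq:ltv1}.

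\textbf{Step 1: the sampled error.} Consider the subsequence $\check e_l\doteq e_{l\tau}$. It obeys the LTV recursion $\check e_{l+1}=F_{l\tau}^{\tau}\check e_l+\mu_{l\tau}$. By Assumption~\ref{assu:tm_bound}, each block matrix has norm at most $\gamma<1$, uniformly over $x_0\in X_0$ and $\boldsymbol r\in\mathcal R$. Hence products of $m$ consecutive blocks have norm at most $\gamma^m$. The variation-of-constants formula then gives
\begin{equation*}
\|e_{l\tau}\|\le\gamma^{l}\|e_0\|+\sum_{j=0}^{l-1}\gamma^{l-1-j}\|\mu_{j\tau}\|.
\end{equation*}

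\textbf{Step 2: uniform bounds inside a block.} To bound $\|\mu_{j\tau}\|$ by a constant times $\sum_i\|v_i\|$ over the block, and to pass from sampled times to arbitrary $t$, I need a uniform bound $\|F_s^{k}\|\le c$ for all $0\le k<\tau$. This comes down to a uniform bound $\|A_t\|\le a$. This is the step I expect to be the main obstacle, since the assumption only controls $\tau$-fold products.

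My first attempt is to exploit the structure of $A_t$. From \eqref{eq:At_mat}, $\|A_t\|=\|f_c(r_t,\mathbf r_t)-f_c(x_t,\mathbf r_t)\|/\|r_t-x_t\|$. This quotient is bounded provided $f_c$ is Lipschitz in $x$ on the relevant compact region, which uses the smoothness of $f$ and of the penalties. If that cannot be justified, I fall back on the argument the authors seem to intend: an unbounded $A_t$ along an admissible trajectory would generically make some $\|F_l^\tau\|$ blow up, contradicting \eqref{eq:stab_cond}. I would state this explicitly as an implicit regularity requirement of the assumption.

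\textbf{Step 3: arbitrary times.} Given $a$, set $c=\max(1,a)^{\tau-1}$. Then $\|\mu_{j\tau}\|\le c\sum_{i=j\tau}^{j\tau+\tau-1}\|v_i\|$. For $t=l\tau+k$ with $0\le k<\tau$, write $e_t=F_{l\tau}^{k}e_{l\tau}+(\text{at most }k\text{ forcing terms})$, which multiplies the bound by at most $c$ and adds terms in $\|v_i\|$, $i<t$. Reindexing, a sample $v_i$ with $i$ in block $j=\lfloor i/\tau\rfloor$ carries the factor $\gamma^{l-1-j}$. Using $l\ge t/\tau-1$ and $j\le i/\tau$, together with $\rho=\gamma^{1/\tau}$, gives
\begin{equation*}
\gamma^{l}\le\gamma^{-1}\rho^{t}\quad\text{and}\quad\gamma^{l-1-j}\le\gamma^{-2}\rho^{t-i}.
\end{equation*}
Therefore
\begin{equation*}
\|e_t\|\le c^2\gamma^{-1}\rho^t\|e_0\|+c^2\gamma^{-2}\sum_{i=0}^{t-1}\rho^{t-i}\|v_i\|.
\end{equation*}
Splitting $\|v_i\|\le\|\xi_i\|+\|d_i\|$ yields (ii), with $\Lambda_0=c^2\gamma^{-1}$ and $\Lambda_\xi=\Lambda_d=c^2\gamma^{-2}$.

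\textbf{Step 4: claim (i).} Since $x_t=r_t-e_t$, we have $\|x_t\|\le\|r_t\|+\|e_t\|$ and $\|e_0\|\le\|x_0\|+\|r_0\|$. The set $R$ is compact, so $\|r_t\|\le\bar r$. Also $\xi_i$ depends continuously on $r$-values in compact sets, so it is bounded. These constant contributions, together with $\sum\rho^{t-i}\le\rho/(1-\rho)$, go into the bias $\Lambda_b$. The $d$-terms play the role of the input $w$ in Definition~\ref{def:fg_stab}, which gives finite-gain stability on $(X_0,\mathcal R)$.
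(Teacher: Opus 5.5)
Your proposal follows the paper's proof step by step. The shared steps are the block recursion $e_{t+\tau}=F_t^{\tau}e_t+\mu_t$, the sampled subsequence $\check e_l=e_{l\tau}$ with products bounded by $\gamma^m$, a uniform bound on $\|A_t\|$ to reach intermediate times, the estimate $\gamma^{\lfloor t/\tau\rfloor}\le\gamma^{t/\tau-1}$ giving $\rho=\gamma^{1/\tau}$, and claim (i) from boundedness of the reference. Your bookkeeping is in fact cleaner than the appendix: you use the correct superscript $t+\tau-i-1$ in $\mu_t$ and keep explicit the intra-block constant $c$ that the appendix leaves implicit when bounding $\|\mu_j\|$. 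For the bound on $\|A_t\|$, the paper uses exactly your fallback argument (an unbounded $A_t$ would violate \eqref{eq:stab_cond}), and you are right to flag it as an implicit regularity requirement rather than a consequence of the assumption. Your first Lipschitz route would not work in general, since $c^*(x_t,\mathbf{r}_t)$ minimizes a nonconvex problem and $f_c$ need not even be continuous in $x_t$.
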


\textbf{Proof.} See the Appendix. $\square$

\begin{remark} According to Theorem \ref{thm:fgs_nom}, the key condition for closed-loop finite-gain
stability is $\gamma<1$, see \eqref{eq:stab_cond}. 
As discussed above, this condition relies on a quasi-LPV representation of the tracking error dynamics and is less conservative than the standard assumptions that can be used for quasi-LPV/LTV systems. 
An important observation is that, while checking FGS using Definition \ref{def:fg_stab} is practically impossible, the verification of \eqref{eq:stab_cond} is possible.
Indeed, in Section \ref{sec:nmpc_des} we propose a viable
method for the numerical estimation of $\gamma$, allowing us to verify
stability in practice. The estimate of $\gamma$ is also the core
of a systematic procedure for efficient NMPC design, see Sections~\ref{sec:nmpc_des} and~\ref{sec:results}. $\square$
\end{remark}

% \subsection{Reference smoothing}

% \label{subsec:ref_sm}

% In several situations, it may be convenient to use smooth reference
% signals. Indeed, smooth references are in general expected to give
% smaller values of $\xi_{t}$, $A_{t}$ and $F_{t}^{\tau}$. Moreover,
% there may be situations where a NMPC controller works well even if
% Assumptions A1-A2 are not satisfied (they are sufficient conditions).
% This may happen for example when a constraint suddenly appears and
% abrupt movements must be accomplished (e.g., in obstacle avoidance
% maneuvers) or when the reference is far from the current state and
% a constraint obliges the system to perform a complicated trajectory
% to get close to the reference. 

% In such situations, a smoothed reference $\breve{r}_{t}$ can be constructed
% as follows. Initial value: $\breve{r}_{0}=x_{0}$. For all $t>0$:
% \[
% \begin{alignedat}{1}\breve{r}_{t} & \doteq\arg\min_{\hat{r}\in R}\left\Vert \hat{r}-r_{t}\right\Vert \\
%  & \mathrm{s.t.}:\;\left\Vert \hat{r}-\breve{r}_{t-1}\right\Vert \leq\delta
% \end{alignedat}
% \]
% where $\delta$ is the maximum allowed variation of $\breve{r}_{t}$.
% Then, Theorem \ref{thm:fgs_nom} can be applied with $r_{t}\rightarrow\breve{r}_{t}$,
% and the resulting tracking error is bounded as $\left\Vert e_{t}\right\Vert \leq\bar{e}_{t}+\left\Vert r_{t}-\breve{r}_{t}\right\Vert $,
% where $\bar{e}_{t}$ is the right hand side of (\ref{eq:treb}) computed
% with $r_{t}\rightarrow\breve{r}_{t}$. 

% The effectiveness of this technique is illustrated in \textcolor{red}{the
% numerical example of Section ??}. 

%% file: nmpc_design_2.tex
\section{NMPC design} \label{sec:nmpc_des}
This section presents a systematic NMPC design methodology, allowing an effective selection of a suitable NMPC configuration leading to closed-loop finite-gain stability, as defined by Theorem \ref{thm:fgs_nom}, while providing satisfactory tracking performance. The design procedure consists of three key steps: (i) determination of the integer horizon $\tau$ used in the finite-gain analysis; (ii) a closed-loop function sampling campaign that generates candidate NMPC configurations and closed-loop data over a finite interval of length $\tau$; (iii) a selection rule that chooses the best configuration according to both tracking error and stability criteria.

Let $\Phi$ denote a finite set of NMPC configurations. Each configuration $\phi\in\Phi$ is described by a prediction horizon $T$ and weighting matrices $Q \succ 0$, $R \succ 0$, $P \succ 0$. Let $s=(x_0,\boldsymbol{r},\boldsymbol{d})$ be a generic scenario containing an initial state $x_0\in X_0$, a reference sequence $\boldsymbol{r}=(r_{0},r_{1},\dots)\in\mathcal{R}$ and an exogenous disturbance sequence $\boldsymbol{d}=(d_{0},d_{1},\dots)\in\mathcal{D}$.

At a given time $t$ and for a fixed configuration $\phi$ and scenario $s$, denote: i) $x_t(\phi,s)$  the closed-loop state trajectory produced by the NMPC law; ii) $e_t(\phi,s)$ the corresponding tracking error; and iii) $F_t^{\tau}(\phi,s)$ the transition matrix. 

It is important to highlight that the proposed procedure does not provide a formal guarantee of finite-gain stability for all initial conditions $x_0 \in X_0$. Instead, it offers a practical approach to identify NMPC configurations that are likely to satisfy the finite-gain condition over a predefined operating domain, described through a finite set of representative scenarios. The reliability of the method therefore depends on the coverage of the scenario sets and on the robustness margins introduced in the design.

In the following, the design methodology is described in detail.

\subsection{Selection of the simulation horizon $\tau$}
The integer $\tau$, introduced in Assumption~\ref{assu:tm_bound}, must be chosen sufficiently large to ensure that the norm of the product $F^{\tau}_t(\phi,s)$, denoted by $\gamma(\tau)$, remains strictly less than one for all the considered scenarios. Since the exact verification of this condition over the full domain is generally intractable, we estimate a conservative value of $\tau$ by means of the following procedure:

\begin{enumerate}
  \item Generate an initial scenario set $\mathcal{S}_{\mathrm{in}}=\{s^1,\dots,s^M\}$, where each element is defined as $s^i=(x^i_0,\boldsymbol{r}^i,\boldsymbol{d}^i)$.
  \item Choose a nominal NMPC configuration $\phi_{\mathrm{nom}}$, denoted by the tuple 
  $\phi_{\mathrm{nom}}=(T_{\mathrm{nom}},Q_{\mathrm{nom}},R_{\mathrm{nom}},P_{\mathrm{nom}})$,  
  obtained, for instance, from standard design heuristics such as Bryson's rule \cite{BrysonHo69}.
  \item For each scenario $s^i \in \mathcal{S}_{\mathrm{in}}$ perform a closed-loop simulation, using $\phi_{\mathrm{nom}}$, until a final time $T_{\mathrm{fin}}$ (e.g., the time at which the system reaches the reference or a predefined maximum simulation time). At each time $t=1,\dots,T_{\mathrm{fin}}$ compute the value of $\gamma$ as defined in Eq.  \eqref{eq:stab_cond} and determine the smallest integer $\tau_i$ such that
  \begin{equation} \label{eq:gamma_t}
  %\gamma_i(\tau_i)<1.
  \gamma_i(t)<1,\quad \forall t= \tau_i,...T_{\mathrm{fin}}.
\end{equation}
  \item Compute $\tau^*$ as
  \begin{equation}\label{eq:tau_choice}
 \tau^\star = \max_{i\in\{1,\dots,M\}}\tau_i + \Delta_\tau,
  \end{equation}
    where $\Delta_\tau\ge0$ is a robustness margin. 
\end{enumerate}
The role of $\Delta_\tau$ is to improve the reliability of the finite-gain condition with respect to untested scenarios. Since $\tau^\star$ is computed from a finite scenario set, it may underestimate the true contraction horizon required over the entire set $X_0$. The addition of $\Delta_\tau$ compensates for this effect and increases the likelihood that the finite-gain condition $\gamma(\tau^\star)<1$ also holds for scenarios not explicitly included in $\mathcal{S}_{\mathrm{in}}$. However, this mechanism does not provide a formal guarantee over all $x_0 \in X_0$, and its effectiveness depends on the representativeness of the scenario set.

If~\eqref{eq:tau_choice} cannot be satisfied for any $\tau$, it may be necessary to modify $\phi_{\mathrm{nom}}$ (e.g. increase $T_p$ or retune $Q,R,P$) or relax performance requirements.

\subsection{Closed-loop Function Sampling}
A Closed-loop Function Sampling campaign is conducted to evaluate the finite-time closed-loop response for each NMPC configuration. The steps are:

\begin{enumerate}
  \item Generate an evaluation scenario set $\mathcal{S}=\{s^1,\dots,s^N\}$, with $N \ge M$. This set is typically constructed in order to ensure a proper coverage of the relevant initial states, reference trajectories, and disturbance profiles, and may be an an extension of the initial scenario set, i.e., $\mathcal{S}\supseteq\mathcal{S}_{\mathrm{in}}$.
  \item Build the NMPC configuration set
  $
    \Phi=\{\phi_j\}_{j=1}^K,\, \phi_j=(T^{(j)},Q^{(j)},R^{(j)},P^{(j}),
  $
  by choosing different values of $T$, $Q,R$ and $P$. The matrices $Q$, $R$ and $P$ can be scaled starting from the nominal NMPC configuration $ Q=\alpha_Q Q_{\mathrm{nom}},\,
  R=\alpha_R R_{\mathrm{nom}},\,
  P=\alpha_P P_{\mathrm{nom}}$   
where $\alpha_Q,\alpha_R,\alpha_P>0$ are scalar tuning parameters.
  \item For each configuration $\phi_j$, simulate the closed-loop state trajectory $x_t(\phi_j,s^i)$ over the time interval $t=1,\dots,\tau^*$ for all scenarios $s^i$, and store the resulting tracking error $e_{\tau^*}(\phi_j,s^i)$ and transition matrix $F^{\tau^*}_t(\phi_j,s^i)$.
\end{enumerate}

\subsection{Selection of the optimal NMPC configuration}
For each NMPC configuation $\phi \in \Phi$, we define two metrics:

\paragraph{Tracking error metric}
The worst-case tracking error over all scenarios is defined as
\begin{equation}\label{eq:E_c}
E(\phi)=\max_{i=1,\dots,N} E(\phi,s^i)=\max_{i=1,\dots,N}\|e_{\tau^*}(\phi,s^i)\|.
\end{equation}

\paragraph{Finite-gain stability metric}
The worst-case finite-gain stability index is given by
\begin{equation}\label{eq:L_c}
L(\phi)=\max_{i=1,\dots,N} \gamma_i(\tau^*)=\max_{i=1,\dots,N}\|\big(F^{\tau^*}_t(\phi,s^i)\big)\|.
\end{equation}

For the selection rule, we first normalize the two metrics relative to their worst-case values across the entire set $\Phi$
\[
    E_n(\phi)=\frac{E(\phi)}{\max_{\Phi}E},\qquad
    L_n(\phi)=\frac{L(\phi)}{\max_{\Phi}L},
  \]
and then define the following cost function 
  \begin{equation}
    J_{sr}(\phi)=\alpha_J E_n(\phi)+(1-\alpha_J) L_n(\phi),\qquad \alpha_J\in[0,1].
  \end{equation}
  Finally, the optimal NMPC configuration is selected as
  \begin{equation}\label{eq:Jc}
  \phi^\star=\arg\min_{\phi\in\Phi}J_{sr}(\phi).
  \end{equation}

\subsection{Visualization and sensitivity analysis}
To improve interpretability of the results, the sensitivity of the finite-gain stability index (and tracking performance) to key design parameters can be analyzed throug parametric analysis.
Specifically, surface or level curves plots of $L(\phi)$ and $E(\phi)$ are generated over a two-dimensional grid defined by the prediction horizon $T$ and the ratio $\kappa = \alpha_Q/\alpha_R$.
%$\kappa = \mathrm{Tr}(Q)/\mathrm{Tr}(R)$. 
These plots show how variations in $T$ and in the relative weighting between states and inputs influence the closed-loop stability margin (and tracking accuracy).

\subsection{Summary of the design algorithm}
Algorithm~\ref{alg:design} summarizes the proposed procedure.
\begin{algorithm}[htbp]
\caption{Selection of the optimal NMPC configuration via finite-gain analysis}
\label{alg:design}

\textbf{Input:} System \eqref{eq:plant}, $X_0$, $\mathcal{R}$, $\mathcal{D}$

\textbf{Output:} $\phi^\star$

\begin{algorithmic}[1]

  \State Choose nominal $\phi_{\mathrm{nom}}$ and compute minimal $\tau^*$ satisfying \eqref{eq:tau_choice}.
  \State Construct the NMPC configuration set $\Phi$ and the scenario set $\mathcal{S}$.
  \ForAll{$\phi\in \Phi$}:
  \ForAll{$s^i\in\mathcal{S}$}:
      \State Simulate closed-loop for $t=1,\dots,\tau^*$.
      \State Compute $E(\phi,s^i)$ and $F^{\tau^*}_t(\phi,s^i)$.
  \EndFor
    \State Compute $E(\phi)$ and $L(\phi)$.
    \EndFor
  \State Select $\phi^\star$ through Eq.~\eqref{eq:Jc}.
  \State Produce surface plots and, if necessary, refine the candidate grid.
\end{algorithmic}
\end{algorithm}

By exploiting the finite-gain stability condition of Theorem~\ref{thm:fgs_nom}, the proposed procedure provides a systematic and computationally efficient method for obtaining an optimal NMPC configuration through a function sampling campaign in which each closed-loop simulation is performed only over the horizon $\tau^*$. This significantly reduces the computational effort compared to traditional long-horizon simulations, where different types of scenarios are considered and, for each of them, several maneuvers are simulated for their complete duration.